\author{Eleni Bakali \\ National Technical University of Athens, CS dept.}
\title{On randomized counting versus randomised decision}
\newtheorem{theorem}{Theorem}
\newtheorem{lemma}{Lemma}
\newtheorem{definition}{Definition}
\newtheorem{corollary}{Corollary}
\newenvironment{proof}{\noindent\textit{Proof}. }{\hfill $\Box$}
\begin{document}
\maketitle

\begin{abstract}
We study the question of which counting problems admit f.p.r.a.s., under a structural complexity perspective. Since problems in \#P with NP-complete decision version do not admit f.p.r.a.s. (unless NP = RP), we study subclasses of \#P, having decision version either in P or in RP. We explore inclusions between these subclasses and we present all possible worlds with respect to NP v.s. RP and RP v.s. P.
\end{abstract}

\paragraph*{Keywords} computational complexity, counting complexity, easy decision, FPRAS, RP

\section{Introduction}
The motivation for this paper is to obtain a better understanding of randomized approximate counting  inside \#P.

The class \#P \cite{Valiant79} is the class of functions that count the number of solutions to problems in NP, e.g. \#SAT is the function that on input a formula $\phi$ returns the number of satisfying assignments of $\phi.$ Equivalently, functions in \#P count accepting paths of non-deterministic polynomial time Turing machines (NPTMs).

 NP-complete problems are hard to count, but it is not the case that problems in P are easy to count as well. When we consider counting, non-trivial facts hold. First of all there exist \#P-complete problems, that have decision version in P, e.g. \#DNF. Moreover, some of them can be approximated, e.g. the Permanent \cite{JS96perm} and \#DNF   \cite{KLM89}, while other can not, e.g. \#IS \cite{DFJ02}. 

Regarding approximability, counting versions of NP-complete problems do not admit f.p.r.a.s. unless NP = RP. So we have to concentrate on problems whose decision version is not NP-complete. There are two subclasses of \#P that concern our study, because they have easy decision version; TotP and \#PE \cite{KPSZ01}. \#PE contains all counting problems in \#P with decision version in P, and TotP contains functions that count all paths of NPTMs, and it is also proven to contain all self-reducible problems in \#PE \cite{PZ06}. 

It is observed that all problems in \#P known to admit f.p.r.a.s., belong to TotP.   This observation lead to some questions. Which counting problems admit f.p.r.a.s. from a structural complexity perspective? Does FPRAS $\subseteq$ TotP hold? (In this paper we denote  FPRAS the class of problems in \#P that admit an f.p.r.a.s.) Is there an alternative characterization of FPRAS, in terms of counting paths of NPTMs, like the respective definitions of \#P and TotP? 

In this article we introduce two counting classes, \#RP1 and \#RP2, that can both be considered as counting versions of problems with decision version in RP. We show that FPRAS  lies between them. We also show that  these two classes are not equal, unless NP=RP. 

Moreover we show that if FPRAS $\subseteq$ TotP, then RP=P. So we don't resolve the question of FPRAS vs TotP, but it turns out that if the above inclusion holds, then it is more difficult to prove it, than proving RP=P. Note that for the opposite direction, i.e. TotP $\subseteq$ FPRAS, it is known that it holds iff NP=RP.

We further explore inclusions between all the above mentioned classes, i.e. \#P, \#PE, \#RP1, \#RP2, TotP and FPRAS, and we present all possible worlds with respect to NP vs RP, and RP vs P. 

We present our results in section 3 and we discuss them in section 5. The rest of the paper is organized as follows. Section 2 contains definitions, lemmas and known results used in our proofs. Section 4 contains our proofs. Section 6 provides briefly some related work, and conclusions are given in section 7. 


\section{Preleminaries}

Let $M$ be an NPTM. If $M$ returns $yes$ or $no,$ we denote $acc_M(x)=\#$(accepting paths of $M$ on input $x$). We denote  $p_M$ the polynomial bounding the number of non-deterministic choices made by $M.$

\begin{definition} \cite{Valiant79}
$\#P$ is the class of functions $f$ for which there exists a polynomial-time decidable binary relation $R$ and a polynomial $p$ 
such that $\forall x\in\Sigma^*$,  $f(x)=\big|\{y \in \{0,1\}^*,\  s.t.  |y|=p(|x|)\wedge R(x,y)\}\big|$. 
Equivalently, $\#P=\{acc_M:\Sigma^*\rightarrow \mathbb{N} \ | \  M\text{ is a NPTM}\}$
\end{definition}

\begin{definition} FP is the class of functions in \#P computable in polynomial time.
\end{definition}

\begin{definition}
The \textbf{decision version} of a function in \#P is the language  $L_f=\{x\in \Sigma^* | f(x)>0\}$.
\end{definition}

\begin{definition} \cite{Pagourtzis01}
	$\#PE=\{f: \Sigma^*\rightarrow \mathbb{N} \ | \ f\in\#P \text{ and } L_f \in P \}$.
\end{definition} 

\begin{definition} \cite{KPSZ01}
$\text{TotP} =\{tot_M:\Sigma^*\rightarrow \mathbb{N} \ | \  M\text{ is a NPTM}\}$, where 
$tot_M(x)= \#($all computation paths of $M$ on input $x)-1$.
\end{definition}

TotP contains all self-reducible problems in \#P, with decision version in P. 

Intuitively, self-reducibility means that counting the number of solutions to an instance of a problem, can be performed recursively by computing the number of solutions of some other instances of the same problem. For example \#SAT is self reducible: the number of satisfying assignments of a formula $\phi$ is equal to  the sum of the number of satisfying assignments of $\phi_1$ and $\phi_0$, where $\phi_i$ is $\phi$ with its first variable fixed to $i$. 

The most general notion of self reducibility, that TotP can capture, is
given in \cite{PZ06}. The following is known.

\begin{theorem}\cite{PZ06}
\label{PZmt}
(a) FP $\subseteq$ TotP $\subseteq \#PE \subseteq \#P$. The inclusions are proper unless $P=NP$.\\
(b) TotP is the Karp closure of self-reducible $\#PE$ functions.
\end{theorem}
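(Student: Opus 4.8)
The plan is to handle part (a) inclusion-by-inclusion, then dispatch properness by showing each equality forces $P=NP$, and finally to prove part (b) as two opposite containments.

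For part (a), the containment $\#PE\subseteq\#P$ is immediate from the definition of $\#PE$. For $\mathrm{FP}\subseteq\mathrm{TotP}$, given $f\in\mathrm{FP}$ I would build an NPTM $M$ that first computes $f(x)$ deterministically and then branches so that its computation tree has exactly $f(x)+1$ leaves (possible since $f(x)\le 2^{\mathrm{poly}(|x|)}$); then $tot_M(x)=\#\text{leaves}-1=f(x)$. The substantive inclusion is $\mathrm{TotP}\subseteq\#PE$, which splits into two claims. To see $tot_M\in\#P$, I would use the tree identity that in any computation tree the number of genuine binary-branching nodes equals $(\#\text{leaves})-1$; hence a machine $M'$ that nondeterministically guesses a node of $M$'s tree and accepts iff it is a branching node satisfies $acc_{M'}(x)=tot_M(x)$. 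To see $L_{tot_M}\in P$, note that $tot_M(x)>0$ iff $M$ has at least two computation paths on $x$, which I can decide by simulating $M$ along a single chain of deterministic (out-degree one) steps until it either halts (answer no) or reaches the first nondeterministic branch (answer yes); this runs in time $O(p_M(|x|))$.

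For properness I would show that each equality forces $P=NP$. If $\#PE=\#P$ then $\#\mathrm{SAT}\in\#PE$, so $\mathrm{SAT}=L_{\#\mathrm{SAT}}\in P$. If $\mathrm{FP}=\mathrm{TotP}$ then $\#\mathrm{DNF}\in\mathrm{TotP}\subseteq\mathrm{FP}$ (its decision version is trivially in $P$ and it is self-reducible, so it lies in $\mathrm{TotP}$ by the part (b) construction below); since $\#\mathrm{DNF}$ is $\#P$-complete under Turing reductions, this gives $\#P\subseteq\mathrm{FP}$ and hence $P=NP$. The middle inclusion is the delicate one: here I would isolate the structural property that for $f=tot_M\in\mathrm{TotP}$ and any fixed polynomial $k$, the set $\{x:f(x)\ge k(|x|)\}$ lies in $P$, because one can enumerate the leaves of $M$'s tree with polynomial delay (the tree has polynomial depth) and stop after finding $k$ of them. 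Given an arbitrary $A\in NP$ with witness-counting function $f_A\in\#P$, the function $g=1+2f_A$ satisfies $L_g=\Sigma^*\in P$, so $g\in\#PE$, while $g(x)\ge 2\Leftrightarrow x\in A$. If $\mathrm{TotP}=\#PE$ then $g\in\mathrm{TotP}$, so $A=\{x:g(x)\ge 2\}\in P$; as $A$ was arbitrary, $P=NP$.

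For part (b) I would prove both inclusions between $\mathrm{TotP}$ and the Karp closure of the self-reducible $\#PE$ functions. For the ``$\supseteq$'' direction, I first show every self-reducible $f\in\#PE$ is in $\mathrm{TotP}$: build an NPTM that unfolds the self-reduction tree for $x$, and at each node uses the polynomial-time test $x_i\in L_f$ to branch only into children with $f(x_i)>0$, so that the subtree rooted at $x$ has exactly $f(x)$ leaves; adding one dummy leaf at the root yields a machine with $f(x)+1$ leaves and thus $tot_M=f$. The role of easy decision is exactly to prune zero-valued children so that no spurious leaves are created. I then observe that $\mathrm{TotP}$ is closed under Karp (parsimonious) reductions, since precomposing with a polynomial-time map adds only deterministic steps and leaves the leaf count unchanged; together these give the containment. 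For the ``$\subseteq$'' direction, given $tot_M$ I would define $g\langle M,c\rangle=(\#\text{paths of }M\text{ from configuration }c)-1$, which is in $\#P$ and has $L_g\in P$ by the arguments of part (a), and is self-reducible because the number of paths from $c$ is the sum over successor configurations; then $x\mapsto\langle M,c_0(x)\rangle$ (with $c_0(x)$ the initial configuration) is a Karp reduction from $tot_M$ to $g$.

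The main obstacle is the properness of the middle inclusion $\mathrm{TotP}\subseteq\#PE$: unlike the two extreme inclusions, it cannot be separated by raw counting hardness, since $\mathrm{TotP}$ already contains functions that are $\#P$-hard under Turing reductions. The argument must instead exploit the finer navigational structure of $tot_M$, namely polynomial-time decidability of the thresholds $\{x:f(x)\ge k\}$. Making the leaf-enumeration-with-polynomial-delay argument fully rigorous, and verifying in part (b) that the self-reduction unfolding terminates in polynomial depth while never producing dead leaves, are the steps I would treat most carefully.
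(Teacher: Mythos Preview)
The paper does not prove Theorem~\ref{PZmt}; it is quoted from \cite{PZ06} and used as a black box throughout. So there is nothing in this paper to compare your argument against. That said, your sketch is essentially the argument of the original Pagourtzis--Zachos paper and is correct in outline: the branching-node identity gives $\mathrm{TotP}\subseteq\#P$, the first-branch test gives $L_{tot_M}\in P$, the threshold property $\{x:tot_M(x)\ge k\}\in P$ separates $\mathrm{TotP}$ from $\#PE$ via a padded function such as your $1+2f_A$ (in fact $1+f_A$ already suffices), and the self-reduction unfolding with decision-based pruning handles part~(b).

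Two small points to tighten. First, the identity ``number of branching nodes $=$ number of leaves $-1$'' requires the computation tree to have branching degree at most~$2$; you should say explicitly that any NPTM can be normalized to binary branching without changing the leaf count. Second, in the ``$\subseteq$'' direction of (b), your universal function $g\langle M,c\rangle$ is only in $\#P$ once you include an explicit clock $1^t$ in the input (an arbitrary encoded $M$ need not run in time polynomial in $|\langle M,c\rangle|$); the self-reduction then decreases $t$, which also supplies the well-founded measure guaranteeing polynomial depth.
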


We consider FPRAS to be the class of functions in \#P that admit f.p.r.a.s.

\begin{definition} Let $f\in \#P$. $f\in FPRAS$ if there exists a randomized algorithm that on input $x \in \Sigma^*,$ $\epsilon>0,$ $\delta>0,$  returns a value $\hat{f(x)}$ s.t.
\[\Pr[(1-\epsilon) f(x) \leq \hat{f(x)} \leq (1+\epsilon) f(x)]\geq 1-\delta\] in time poly($|x|,\epsilon^{-1}, \log \delta^{-1}$).
\end{definition}

\begin{theorem}\label{TotP-vs FPRAS 1}
(a) If NP$\neq$RP then there are functions in TotP, that are not in FPRAS. \\
(b) If NP=RP then $\#P\subseteq FPRAS$. 
\end{theorem}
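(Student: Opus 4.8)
For part (a) my plan is to exhibit one witness function. The cleanest self-contained choice builds it from an arbitrary NP-complete language $A$ with polynomial-time witness relation $R$ and witness length $m=p(n)$, where $n=|x|$. I would define an NPTM $M$ that first guesses $y\in\{0,1\}^m$ along a binary tree and then, at each leaf, tests $R(x,y)$: if the test fails the branch halts, while if it succeeds the branch opens $m+n$ further binary choices. Counting leaves gives $tot_M(x)=2^m-1+a(x)\,(2^{m+n}-1)$, where $a(x)=|\{y:R(x,y)\}|$, so $tot_M(x)=2^m-1$ when $x\notin A$ and $tot_M(x)\ge 2^{m+n}-2$ when $x\in A$ --- a multiplicative gap of order $2^n$. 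By construction $tot_M\in$ TotP, and since $tot_M(x)\ge 1$ always, its decision version is trivially in P, consistent with TotP $\subseteq$ \#PE. An alternative, less self-contained route is to take \#IS, which is self-reducible and in \#PE, hence in TotP by Theorem \ref{PZmt}(b), and then to quote the non-approximability result of \cite{DFJ02}.

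Next I would show that an f.p.r.a.s. for $tot_M$ decides $A$ in BPP. Running the scheme with $\epsilon=1/2$ and $\delta=1/4$ produces an estimate $\widehat{tot_M}(x)$; comparing it to the threshold $2^{m+1}$ separates the two cases with error at most $1/4$, since for all large enough $n$ one has $(1-\epsilon)(2^{m+n}-2)>2^{m+1}>(1+\epsilon)(2^m-1)$, the finitely many small $n$ being handled by brute force. Hence $A\in$ BPP and so NP $\subseteq$ BPP. I would then invoke the standard fact that NP $\subseteq$ BPP implies NP $=$ RP: using self-reducibility of an NP-complete problem one builds a candidate witness and verifies it exactly, which removes false positives and collapses the two-sided error to one-sided. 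Taking the contrapositive, if NP $\neq$ RP then $tot_M$ lies in TotP but not in FPRAS, which proves (a).

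For part (b) the plan is Stockmeyer-style approximate counting together with the collapse. For any $f=acc_M\in$ \#P, universal (pairwise-independent) hashing reduces estimating the number of length-$p(n)$ witnesses $y$ of $x$ to polynomially many queries of the form ``does there exist a witness $y$ with $h(y)=0$?'', each an NP predicate; ranging over hash sizes and averaging pins down $f(x)$ within a factor $(1\pm\epsilon)$ with confidence $1-\delta$, giving an f.p.r.a.s. relative to an NP oracle. Under NP $=$ RP each such query is decidable by an RP algorithm, which I would amplify until its error is at most $2^{-t}$ for a large enough polynomial $t$. Substituting these subroutines for the oracle and union-bounding over the polynomially many queries shows that with high probability every query is answered correctly; absorbing this into the confidence parameter turns the oracle algorithm into an ordinary randomized algorithm running in time $\mathrm{poly}(|x|,\epsilon^{-1},\log\delta^{-1})$. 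Hence every $f\in$ \#P admits an f.p.r.a.s., i.e. \#P $\subseteq$ FPRAS.

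The main obstacle in (a) is less the reduction than ensuring that the constructed machine genuinely realizes a TotP function with a provable multiplicative gap, and invoking NP $\subseteq$ BPP $\Rightarrow$ NP $=$ RP correctly. In (b) the delicate part is the error bookkeeping: the hashing analysis must deliver a fully polynomial dependence on $\epsilon^{-1}$, and replacing a perfect NP oracle by polynomially many amplified one-sided RP computations requires a careful union bound so that the composite failure probability stays below $\delta$ without inflating the running time.
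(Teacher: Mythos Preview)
Your proposal is correct. For part (a) you take a genuinely different route from the paper: the paper simply invokes a known hardness result, observing that \#IS lies in TotP (being self-reducible with decision in P) and does not admit an f.p.r.a.s.\ unless NP $=$ RP by \cite{DFJ02}. You instead build, from an arbitrary NP-complete relation, an explicit TotP function with an exponential multiplicative gap between yes- and no-instances, and argue directly that an f.p.r.a.s.\ for it would place NP in BPP (hence NP $=$ RP). Your construction is more self-contained---it avoids importing an inapproximability theorem as a black box---while the paper's citation is shorter and leverages a stronger statement (actual APX-hardness of a natural problem rather than a tailor-made gap instance). You even mention the paper's route as your ``alternative, less self-contained'' option. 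For part (b) the two arguments are essentially the same: both rely on Stockmeyer's approximate counting relative to an oracle in the polynomial hierarchy, collapse that oracle to BPP under NP $=$ RP, and then absorb amplified randomized subroutines into the f.p.r.a.s.; the paper phrases this via a $\Sigma_2^p$ oracle and $\mathrm{RP}^{\mathrm{RP}}\subseteq\mathrm{BPP}$, whereas you phrase it via a direct NP oracle, but the content is the same.
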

\begin{proof}
\#IS (i.e. given a graph, count all independent sets, of all sizes) belongs to TotP, and does not admit f.p.r.a.s. unless NP=RP \cite{DFJ02}.
The second assertion derives from a Stockmeyer's well known theorem \cite{Stockmeyer85a}, see Appendix for a proof sketch.
\end{proof}

Now we introduce two classes, having decision version in RP.
For that, we need to define the set ${\cal MR}$ of Turing Machines associated to problems in RP. 

\begin{definition}
Let $M$ be an NPTM. We denote $p_M$ the polynomial s.t.  on inputs of size $n$, $M$ makes $p_M(n)$ non-deterministic choices.\\
${\cal MR} = \{$NPTM $M\ |\ \forall x\in\Sigma^*  $ either $acc_M(x)=0$ or $acc_M > \frac{1}{2} 2^{p_M(|x|)}\}.$  
\end{definition}

\begin{definition}
\#RP1 = $\{f\in \#P\ |\ \exists M \in {\cal MR}$ s.t. $\forall x$ $f(x) = acc_M(x)\}.$
\end{definition}

\begin{definition}
\#RP2 = $\{f \in \#P \ |\ L_f \in RP \}.$
\end{definition}

For our proofs we will need the following lemma.
\begin{lemma}\label{unbiased estimator}
Unbiased estimator. Let $A\subseteq B$ be two finite sets, and let $p=\frac{|A|}{|B|}$. Suppose we take $m$ samples from $B$ uniformly at random, and let a be the number of them that belong to $A$. Then $\hat{p}=\frac{a}{m}$ is an unbiased estimator of $p$, and it suffices $m=poly(p^{-1},\epsilon^{-1},\log \delta^{-1})$ in order to have 
\[\Pr[ (1-\epsilon) p \leq  \hat{p}\leq (1+\epsilon) p] \geq 1-\delta.\]   
\end{lemma}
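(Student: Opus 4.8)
The plan is to recognize the experiment as a sequence of independent Bernoulli trials and then apply a standard multiplicative concentration inequality. First I would observe that drawing one element of $B$ uniformly at random and testing membership in $A$ is a Bernoulli trial with success probability exactly $\frac{|A|}{|B|}=p$, since a uniformly random element of $B$ lands in $A$ with probability $|A|/|B|$. Hence if $X_1,\dots,X_m$ are the indicators of the $m$ independent draws, then $a=\sum_{i=1}^m X_i$ is distributed as $\mathrm{Binomial}(m,p)$, with mean $\mu:=\mathbb{E}[a]=mp$. Unbiasedness is then immediate from linearity of expectation: $\mathbb{E}[\hat p]=\mathbb{E}[a]/m=p$.

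For the tail bound I would first rewrite the target event. The event $(1-\epsilon)p\le\hat p\le(1+\epsilon)p$ is, after multiplying through by $m$, exactly $(1-\epsilon)\mu\le a\le(1+\epsilon)\mu$, i.e.\ $|a-\mu|\le\epsilon\mu$. Since $a$ is a sum of independent $\{0,1\}$ random variables, the multiplicative Chernoff bound applies directly: for $0<\epsilon\le 1$ one has $\Pr[a\ge(1+\epsilon)\mu]\le e^{-\mu\epsilon^2/3}$ and $\Pr[a\le(1-\epsilon)\mu]\le e^{-\mu\epsilon^2/3}$, so by a union bound
\[ \Pr\big[\,|a-\mu|>\epsilon\mu\,\big]\ \le\ 2\,e^{-\mu\epsilon^2/3}\ =\ 2\,e^{-mp\epsilon^2/3}. \]
If $\epsilon>1$ the requirement is only easier, so we may assume $\epsilon\le 1$ throughout.

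Finally I would solve for the number of samples. Requiring the right-hand side to be at most $\delta$ gives $2e^{-mp\epsilon^2/3}\le\delta$, equivalently $m\ge\frac{3}{p\epsilon^2}\ln\frac{2}{\delta}$, which is of the form $\mathrm{poly}(p^{-1},\epsilon^{-1},\log\delta^{-1})$ as claimed. The one point I would be careful about is the dependence on $\delta$: it is precisely the \emph{exponential} concentration of the Chernoff bound that yields the desired $\log\delta^{-1}$ dependence rather than $\delta^{-1}$. A cruder second-moment argument via Chebyshev's inequality would only give $m=\mathrm{poly}(p^{-1},\epsilon^{-1},\delta^{-1})$, so the logarithmic confidence dependence is exactly what forces a Chernoff-type bound (or, equivalently, a median-of-means amplification layered on top of Chebyshev). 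I would also remark that the $p^{-1}$ factor is unavoidable for a \emph{multiplicative} guarantee, since controlling the relative error of a rare event requires drawing enough samples to witness a non-vanishing number of hits; this is the reason the lemma will later be applicable only when $p$ is not too small.
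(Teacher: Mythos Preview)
Your argument is correct and is exactly the standard derivation via multiplicative Chernoff bounds; the paper itself states this lemma without proof in the preliminaries, treating it as a known fact, so there is nothing to compare against beyond noting that your approach is the canonical one.
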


\section{Results}
We first present unconditional inclusions  between the above mentioned classes. These inclusions together with those already known by theorem \ref{PZmt} can be summarised in the following diagram. 

\begin{center}
\begin{tikzpicture}
\node at (1,3) {\#P};
\draw [thick, -> ] (1,2.3) -- (1,2.7);
\node at (1,2) {\#RP2};
\draw [thick, ->] (0.5,1.3) -- (0.9,1.7);
\node at (0,1) {\#PE};
\draw [thick, ->] (1.5,1.3) -- (1.1,1.7);
\node at (2,1) {FPRAS};
\draw [thick, ->] (0,0.3) -- (0,0.7);
\node at (0,0) {TotP};
\draw [thick, ->] (2,0.3) -- (2,0.7);
\node at (2,0) {\#RP1};
\node at (1,-1) {FP};
\draw [thick, ->] (0.8,-0.7) -- (0.5,-0.3);
\draw [thick, ->] (1.2,-0.7) -- (1.5,-0.3);
\end{tikzpicture} 
\end{center} 

Then we explore more inclusions between these counting classes, with respect to NP vs RP and RP vs P. The results can be summarized in the figures of theorem \ref{four worlds}. 

The most important corollaries  of this investigation are that FPRAS is not all contained in TotP unless RP=P, and \#RP1 does not coincide with \#RP2 unless NP=RP.

\section{Proofs}
\subsection{Unconditional inclusions}

\begin{theorem}
$FP\subseteq \#RP1 \subseteq \#RP2 \subseteq \#P.$ Also $\#PE\subseteq \#RP2$
\end{theorem}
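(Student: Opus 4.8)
The statement bundles four inclusions of rather different difficulty, so the plan is to dispatch the three routine ones first and then concentrate on the single substantive claim, $FP \subseteq \#RP1$. The inclusion $\#RP2 \subseteq \#P$ needs no argument, since $\#RP2$ is defined as a subclass of $\#P$. For $\#PE \subseteq \#RP2$ I would simply observe that if $f \in \#PE$ then $f \in \#P$ and $L_f \in P$; as $P \subseteq RP$ this gives $L_f \in RP$, which is exactly the membership condition for $\#RP2$.

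For $\#RP1 \subseteq \#RP2$, take $f \in \#RP1$ witnessed by some $M \in {\cal MR}$ with $acc_M = f$. The key observation is that $M$ is already an $RP$-type machine for $L_f$: on input $x$ its acceptance probability is $acc_M(x)$ divided by its number of computation paths, which is $0$ when $f(x)=0$ and exceeds $1/2$ when $f(x)>0$, precisely by the defining gap property of ${\cal MR}$. Hence $L_f \in RP$, and since $f \in \#P$ we conclude $f \in \#RP2$.

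The real content is $FP \subseteq \#RP1$, where the obstacle is that membership in $\#RP1$ demands not merely reproducing the value $f(x)$ as a number of accepting paths, but doing so while those accepting paths constitute a strict majority of all paths whenever $f(x) > 0$. My plan is, given $f \in FP$, to build an NPTM $M$ that on input $x$ first computes $v = f(x)$ in polynomial time, and then, using the standard ability of NPTMs to realize any polynomial-time computable number of computation paths, branches so as to produce exactly $2v-1$ paths of which precisely $v$ accept when $v > 0$, and a single rejecting path when $v = 0$. Then $acc_M(x) = v = f(x)$ for every $x$, while the acceptance ratio is either $0$ or $v/(2v-1) > 1/2$, so that $M \in {\cal MR}$ and therefore $f \in \#RP1$.

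I expect the delicate step to be exactly this path-construction gadget: I must guarantee the total count $2v-1$ together with exactly $v$ accepting leaves \emph{simultaneously} for all inputs, so that the ${\cal MR}$ threshold is met on the nose and not merely approximately. In particular one cannot pad the path count up to a power of two without destroying the strict majority, so the gadget must control both the accepting and the rejecting branches precisely. Once it is in place, all four inclusions follow and the diagram is justified.
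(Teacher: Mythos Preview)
Your treatment of the three routine inclusions matches the paper's (it dismisses $\#RP1\subseteq\#RP2\subseteq\#P$ and $\#PE\subseteq\#RP2$ as ``immediate by definitions''), and your argument for $\#RP1\subseteq\#RP2$ is actually more explicit than what the paper writes.

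For $FP\subseteq\#RP1$ your construction is correct in spirit but takes a different route from the paper. The paper has $M$ compute $f(x)$, then determine the least $i$ with $f(x)\in(2^{i-1},2^i]$, make exactly $i$ binary nondeterministic choices, and accept on the first $f(x)$ of the resulting $2^i$ paths; since $f(x)>2^{i-1}$ this already gives a strict majority while keeping the total path count a power of two, which matches the literal form of the ${\cal MR}$ condition $acc_M(x)>\tfrac{1}{2}\,2^{p_M(|x|)}$. Your $2v-1$--path gadget also secures a strict majority ($v/(2v-1)>1/2$) and is conceptually tidy, but the total number of paths is not a power of two, so it satisfies the ${\cal MR}$ definition only under the natural ``more than half of all paths accept'' reading rather than the one stated in the paper. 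In particular, your caveat that ``one cannot pad the path count up to a power of two without destroying the strict majority'' is too pessimistic: the paper's construction shows exactly how to pad---choose the \emph{smallest} power of two that is at least $f(x)$, and the majority survives automatically.
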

\begin{proof}
Let $f\in FP$. We will show $f\in \#RP1$. We will construct an NPTM $M\in {\cal MR}$ s.t. on input $x$ $acc_M(x)=f(x)$. Let $x\in \Sigma^*$. $M$ computes $f(x)$. Then it computes $i\in \mathbb{N}$ s.t. $f(x)\in (2^{i-1},2^i]$. $M$ makes $i$ non-deterministic choices $b_1,b_2,...,b_i$. Each such $b\in \{0,1\}^i$ determines a path, in particular, $b$ corresponds to the $(b+1)$ path (since $0^i$ is the first path). $M$ returns yes iff $b+1 \leq f(x)$, so $acc_M=f(x)$. Since $f(x) > 2^{i-1},$ $M\in{\cal MR}.$ 

The other inclusions are immediate by definitions.
\end{proof}

\begin{theorem}\label{main theorem}
$\#RP1 \subseteq FPRAS \subseteq \#RP2.$
\end{theorem}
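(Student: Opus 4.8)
The plan is to establish the two inclusions independently, and the first is the routine one. For $\#RP1\subseteq FPRAS$, take $f\in\#RP1$ witnessed by a machine $M\in\mathcal{MR}$ with $f(x)=acc_M(x)$, and write $N=2^{p_M(|x|)}$ for the number of computation paths of $M$ on $x$. By the defining property of $\mathcal{MR}$, the ratio $p=acc_M(x)/N$ is either $0$ or strictly larger than $1/2$. I would turn this into an f.p.r.a.s. by uniform path sampling: on input $(x,\epsilon,\delta)$ draw $m$ strings $b\in\{0,1\}^{p_M(|x|)}$ uniformly, simulate $M$ on each, let $a$ count the accepting ones, and output $\hat{f}=(a/m)N$. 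Applying Lemma \ref{unbiased estimator} with $B$ the set of all paths and $A$ the accepting paths estimates $p$, and hence $f(x)=pN$, to within the desired multiplicative factor. The point that makes this efficient is that whenever $f(x)\neq 0$ we have $p>1/2$, so $p^{-1}<2$ and the sample bound collapses to $m=\mathrm{poly}(\epsilon^{-1},\log\delta^{-1})$; since each sample costs one simulation of $M$, the total running time is polynomial in $|x|,\epsilon^{-1},\log\delta^{-1}$. The boundary case $f(x)=0$ needs only a remark: $M$ then has no accepting path, so $a=0$ and $\hat{f}=0$ exactly.

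For $FPRAS\subseteq\#RP2$, fix $f\in FPRAS$; the task is to show $L_f=\{x:f(x)>0\}\in RP$. Running the assumed f.p.r.a.s. with a constant relative error, say $\epsilon=1/2$, already separates the two cases statistically: when $f(x)\geq 1$ the estimate satisfies $\hat{f}(x)\geq (1-\epsilon)f(x)\geq 1/2>0$ with probability at least $1-\delta$, and when $f(x)=0$ the multiplicative guarantee forces $\hat{f}(x)=0$ with probability at least $1-\delta$. Thresholding on $\hat{f}(x)>0$ therefore decides $L_f$ with two-sided error, which is only a BPP-type conclusion. To reach genuine RP I would gate acceptance on a \emph{verifiable certificate}: using $f\in\#P$, fix a polynomial-time relation $R$ with $f(x)=|\{y:R(x,y)\}|$ and allow the machine to accept only after it has exhibited some $y$ with $R(x,y)$. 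With this rule, an input with $f(x)=0$ has no witness and is rejected with certainty, which is exactly the one-sided condition RP demands.

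The main obstacle is the complementary half of the RP requirement: guaranteeing that a positive instance yields an accepted witness with probability at least $1/2$. Equivalently, one must convert the two-sided error of the f.p.r.a.s. into one-sided error, and the natural device is the approximate-counting-to-sampling paradigm, in which a good estimate of $|\{y:R(x,y)\}|$ drives an almost-uniform search for a witness. This step is transparent when the underlying relation is self-reducible, so that the f.p.r.a.s. may be invoked on the polynomially many subinstances generated during the descent; the delicate point I expect to spend the most effort on is producing such a witness-generating procedure, and controlling its success probability, from the approximation guarantee alone, since a generic $f\in FPRAS$ carries no self-reducibility structure for free.
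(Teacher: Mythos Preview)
Your argument for $\#RP1\subseteq FPRAS$ is exactly the paper's: uniform path sampling plus the unbiased-estimator lemma, with the promise $p>1/2$ absorbing the factor $p^{-1}$ in the sample bound.

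For $FPRAS\subseteq\#RP2$ you diverge from the paper and leave the argument unfinished. The paper does nothing beyond the threshold test you already sketched: run the f.p.r.a.s.\ with fixed constants $\epsilon=\delta=1/4$ and accept iff $\hat f(x)\ge 1/2$. The one-sided error is read off directly from the multiplicative form of the guarantee at $f(x)=0$: the admissible interval $[(1-\epsilon)\cdot 0,(1+\epsilon)\cdot 0]$ collapses to $\{0\}$, so on a zero instance the f.p.r.a.s.\ outputs $0$ and the threshold test rejects. That is the whole proof of the second inclusion in the paper. Your detour through witness extraction and the approximate-counting-to-sampling machinery is unnecessary for this statement, and---as you yourself concede---you cannot complete it without a self-reducibility hypothesis that a generic $f\in FPRAS$ does not carry; so your proposal, as written, does not actually establish $FPRAS\subseteq\#RP2$. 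The subtlety you flagged is real (under the literal definition the output is pinned to $0$ only with probability $\ge 1-\delta$, whereas the paper states it as a certainty), but the paper's intended resolution is simply this reading of the zero case of the f.p.r.a.s.\ definition, not a sampling argument.
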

\begin{proof}
For the first inclusion, let $\epsilon > 0, \delta >0.$ Let $f\in \#RP1$. There exists a $M_f\in {\cal MR}$ s.t. $\forall x$ $acc_{M_f}(x)=f(x).$ Let $q(|x|)$ be the number of non-deterministic choices of $M_f.$ Let $p=\frac{f(x)}{2^{q(|x|)}}$. We can compute an estimate $\hat{p}$ of $p,$ by choosing $m=poly(\epsilon^{-1}, \log \delta^{-1})$ paths uniformly at random. Then we can compute $\hat{f(x)}=\hat{p}2^{q(|x|)}$.

If $f(x)=0$ then $\hat{f(x)}=0.$ If $f(x)\neq 0$, then $p>\frac{1}{2},$ so by the   unbiased estimator lemma \ref{unbiased estimator}, $\hat{f(x)}$ satisfies the definition of f.p.r.a.s.

For the second inclusion, let $f\in FPRAS$, we will show that the decision version of $f$, i.e. deciding if $f(x)=0$, is in RP. On input $x$ we run the f.p.r.a.s. for $f$ with e.g. $\epsilon=\delta=\frac{1}{4}.$ We return yes iff $\hat{f(x)}\geq \frac{1}{2}.$ 

If $f(x)=0$ the the f.p.r.a.s. returns $0$, so we return yes with probability $0$. If $f(x)\geq 1$, then $\hat{f(x)}\geq \frac{1}{2}$ with probability at least $1-\delta$, so we return yes with the same probability.
\end{proof}

\begin{corollary}\label{Corollary1}
If $FPRAS \subseteq TotP$ then RP=P.
\end{corollary}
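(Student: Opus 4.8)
The plan is to exploit the tension between the decision complexity of the two classes being identified: every function in \#RP1 has its decision version only guaranteed to lie in RP, whereas every function in TotP has its decision version in P. Concretely, I would start from the hypothesis $FPRAS \subseteq TotP$ and combine it with the inclusion $\#RP1 \subseteq FPRAS$ established in Theorem \ref{main theorem}, to obtain $\#RP1 \subseteq TotP$. The corollary then reduces to showing that this single inclusion forces $RP = P$.

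The key step is to realize that the machine class ${\cal MR}$ was defined precisely so as to capture the RP acceptance criterion. So I would take an arbitrary language $L \in RP$ and fix a polynomial-time randomized algorithm (viewed as an NPTM $M$) witnessing $L \in RP$: if $x \in L$ then strictly more than half of the $2^{p_M(|x|)}$ computation paths accept, and if $x \notin L$ then no path accepts. This is exactly the condition defining membership $M \in {\cal MR}$, so the function $f = acc_M$ belongs to $\#RP1$. Moreover, by construction its decision version satisfies $L_f = \{x : acc_M(x) > 0\} = L$.

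Now I would invoke the inclusion $\#RP1 \subseteq TotP$ obtained above to conclude $f \in TotP$. Since $TotP \subseteq \#PE$ by Theorem \ref{PZmt}, every function in TotP has its decision version in P, and therefore $L = L_f \in P$. As $L \in RP$ was arbitrary, this gives $RP \subseteq P$; combined with the trivial inclusion $P \subseteq RP$, we obtain $RP = P$.

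The step I expect to carry the real content is the second one, namely observing that the definition of ${\cal MR}$ makes every RP language arise as the support $L_f$ of an explicit $\#RP1$ function. Once that identification is made, the argument is just a chase through the inclusions $\#RP1 \subseteq FPRAS \subseteq TotP \subseteq \#PE$ together with the fact that $\#PE$ functions have polynomial-time decidable support; no estimation or randomized analysis is needed beyond what is already packaged in Theorem \ref{main theorem}.
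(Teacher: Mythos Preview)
Your proposal is correct and follows essentially the same route as the paper: derive $\#RP1 \subseteq TotP$ from the hypothesis together with Theorem~\ref{main theorem}, then for an arbitrary $L\in RP$ take the witnessing machine $M\in{\cal MR}$, set $f=acc_M\in\#RP1$ with $L_f=L$, and conclude $L\in P$ via $TotP\subseteq\#PE$. The paper's proof is terser (it writes ``$\#A\in\#RP1$'' for this construction) but the argument is identical.
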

\begin{proof}
If FPRAS$\subseteq$ TotP, then \#RP1 $\subseteq$ TotP, and then for all $f \in \#RP1$ $L_f\in$ P. So if $A\in RP$ then $\#A\in \#RP1$, and thus $A=L_{\#A}\in$ P. Thus RP=P.
\end{proof}

\begin{corollary}\label{Corollary2}
If $\#RP1=\#RP2$ then NP=RP.
\end{corollary}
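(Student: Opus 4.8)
The plan is to exploit the sandwich $\#RP1 \subseteq FPRAS \subseteq \#RP2$ established in Theorem \ref{main theorem}. Under the hypothesis $\#RP1 = \#RP2$, this chain of inclusions collapses, forcing all three classes to coincide and in particular $FPRAS = \#RP2$. So the whole argument reduces to exhibiting a single function that lives in $\#RP2$ but cannot admit an f.p.r.a.s. unless $NP = RP$; applying $FPRAS = \#RP2$ to such a function then yields the conclusion. This mirrors the strategy of Corollary \ref{Corollary1}, where a collapse of inclusions was turned into a separation-theoretic consequence.

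For the witness function I would take $\#IS$. The key observation is that its decision version is trivial: every graph has the empty set as an independent set, so $\#IS(G) \geq 1$ for every $G$, whence $L_{\#IS}$ is essentially all of $\Sigma^*$ and lies in $P \subseteq RP$. Consequently $\#IS \in \#RP2$ directly from the definition. On the other hand, Theorem \ref{TotP-vs FPRAS 1}(a) records that $\#IS$ does not admit an f.p.r.a.s. unless $NP = RP$, i.e. $\#IS \in FPRAS$ already forces $NP = RP$.

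Putting these together: assuming $\#RP1 = \#RP2$ gives $FPRAS = \#RP2 \ni \#IS$, so $\#IS$ admits an f.p.r.a.s., and then Theorem \ref{TotP-vs FPRAS 1}(a) delivers $NP = RP$, as required.

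I expect the only real subtlety to be the choice of witness rather than any calculation. One must pick a function whose decision version is genuinely easy, so that membership in $\#RP2$ is unconditional, yet whose \emph{approximate} counting is provably hard under $NP \neq RP$. Candidates like $\#SAT$ fail the first requirement, since their decision version is $NP$-complete and placing them in $\#RP2$ would itself presuppose $SAT \in RP$, making the argument circular. The class $TotP$ supplies exactly the right kind of example through $\#IS$, which is why the inapproximability result of Theorem \ref{TotP-vs FPRAS 1}(a) is the crucial ingredient.
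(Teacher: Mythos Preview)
Your proof is correct and follows essentially the same route as the paper. The paper's version phrases the final step as ``$TotP \subseteq \#RP2 = FPRAS$, hence $NP = RP$ by Theorem \ref{TotP-vs FPRAS 1}(a)'', whereas you unpack that step by naming the concrete witness $\#IS \in \#RP2$; since Theorem \ref{TotP-vs FPRAS 1}(a) is itself proved via $\#IS$, the two arguments coincide.
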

\begin{proof}
If $\#RP1=\#RP2$ then they are both equal to FPRAS, thus TotP $\subseteq$ FPRAS, thus NP=RP from theorem \ref{TotP-vs FPRAS 1}.
\end{proof}

\subsection{The four possible worlds}

Now we will explore further relationships between the above mentioned classes, and we will present all four possible worlds inside \#P, with respect to NP vs RP vs P.

\begin{theorem}\label{four worlds}
The following figures hold. 
 
In the figures, $A \rightarrow B$ means $A \subseteq B$,  $A\dashv B$ means $A \not\subset B$, and $A \mapsto B$ means $A \subsetneq B$.

\begin{center}
\begin{tabular}{|c|c|}
\hline 
\begin{tikzpicture}
\node at (1,3) {\textbf{\small{NP=RP=P}}};
\node at (1,1.5) {\#P=FPRAS=\#PE};
\node at (1,1) {=TotP=\#RP2};
\draw [thick, ->] (1,0.3) -- (1,0.7);
\node at (1,0) {\#RP1};
\node at (1,-1) {FP};
\draw [thick, ->] (1,-0.7) -- (1,-0.3);
\end{tikzpicture} 
& 
\begin{tikzpicture}
\node at (1,3) {\textbf{\small{NP=RP$\neq$P}}};
\node at (1,2) {\#P=FPRAS=\#RP2};
\draw [thick, |->] (0.5,1.3 ) -- (0.9,1.7);
\node at (0.1,1) {\#PE};
\draw [thick, ->] (1.5,1.3) -- (1.1,1.7);
\node at (1.9,1) {\#RP1};
\draw [thick, |-] (0.7,1) -- (1.3,1);
\draw [thick, |->] (0.1,0.3) -- (0.1,0.7);
\node at (0.1,0) {TotP};
\draw [thick, -|] (1.3, 0.7) -- (0.7,0.3);
\node at (1,-1) {FP};
\draw [thick, |->] (0.8,-0.7) -- (0.5,-0.3);
\draw [thick, |->] (1.2,-0.7) -- (1.9,0.7);
\end{tikzpicture} 
\\ 
\hline 
\begin{tikzpicture}
\node at (1,4) {\textbf{\small{NP$\neq$RP$\neq$P}}};
\node at (1,3) {\#P};
\draw [thick, |-> ] (1,2.3) -- (1,2.7);
\node at (1,2) {\#RP2};
\draw [thick, |->] (0.5,1.3) -- (0.9,1.7);
\node at (0,1) {\#PE};
\draw [thick, |->] (1.5,1.3) -- (1.1,1.7);
\draw [thick, |-|] (0.7,1) -- (1.3,1);
\node at (2,1) {FPRAS};
\draw [thick, |->] (0,0.3) -- (0,0.7);
\node at (0,0) {TotP};
\draw [thick, |-|] (0.7,0) -- (1.3,0);
\draw [thick, ->] (2,0.3) -- (2,0.7);
\node at (2,0) {\#RP1};
\draw [thick, |-|] (0.7,0.3) -- (1.3,0.7);
\draw [thick, |-|] (1.3,0.3) -- (0.7,0.7);
\node at (1,-1) {FP};
\draw [thick, |->] (0.8,-0.7) -- (0.5,-0.3);
\draw [thick, |->] (1.2,-0.7) -- (1.5,-0.3);
\end{tikzpicture} 
& 
\begin{tikzpicture}
\node at (1,4) {\textbf{\small{NP$\neq$RP=P}}};
\node at (1,3) {\#P};
\draw [thick, |-> ] (1,2.3) -- (1,2.7);
\node at (1,2) {\#PE=\#RP2};
\draw [thick, |->] (0.5,1.3) -- (0.9,1.7);
\node at (0,1) {TotP};
\draw [thick, |->] (1.5,1.3) -- (1.1,1.7);
\draw [thick, -|] (0.7,1) -- (1.3,1);
\node at (2,1) {FPRAS};
\draw [thick, ->] (2,0.3) -- (2,0.7);
\node at (2,0) {\#RP1};
\draw [thick, |-] (1.3,0.3) -- (0.7,0.7);
\node at (1,-1) {FP};
\draw [thick, ->] (1.2,-0.7) -- (1.5,-0.3);
\draw [thick, |->] (0.8,-0.7) -- (0.1,0.7);
\end{tikzpicture} \\ 
\hline 
\end{tabular}
\end{center} 
\end{theorem}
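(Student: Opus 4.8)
The statement is in essence a bookkeeping theorem: each of the four figures only records, for its world, which of the already–established inclusions collapse to equalities, which stay proper, and which reverse into non–inclusions. So the plan is to isolate a short list of \emph{collapse} and \emph{separation} facts, each tying one drawn relation to one of the parameters NP vs RP, NP vs P, RP vs P, and then to read each figure off by substituting the truth values the world's hypothesis forces. Before anything else I would record the unconditional backbone FP $\subseteq$ \#RP1 $\subseteq$ FPRAS $\subseteq$ \#RP2 $\subseteq$ \#P together with FP $\subseteq$ TotP $\subseteq$ \#PE $\subseteq$ \#RP2 (from Theorem \ref{PZmt}, Theorem \ref{main theorem}, and the unconditional inclusions proved above), and the elementary implications between the parameters inside a world, e.g. NP $=$ P $\Rightarrow$ RP $=$ P, so that in every world with RP $\neq$ P we automatically have NP $\neq$ P (equivalently P $\neq$ NP).

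The collapse facts I would prove as biconditionals. First, \#P $=$ FPRAS $=$ \#RP2 iff NP $=$ RP: the direction NP $=$ RP $\Rightarrow$ \#P $\subseteq$ FPRAS is Theorem \ref{TotP-vs FPRAS 1}(b) (and FPRAS $\subseteq$ \#RP2 $\subseteq$ \#P is unconditional), while any collapse \#P $=$ FPRAS forces TotP $\subseteq$ FPRAS, hence NP $=$ RP by Theorem \ref{TotP-vs FPRAS 1}(a). Second, \#PE $=$ \#P iff NP $=$ P, which is immediate since $L_f \in$ NP for every $f \in$ \#P (so NP $=$ P collapses \#P into \#PE), while \#SAT witnesses the converse. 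Third, \#RP2 $=$ \#PE iff RP $=$ P, and likewise FPRAS $\subseteq$ \#PE iff \#RP1 $\subseteq$ \#PE iff RP $=$ P: for RP $=$ P the set $\{f : L_f \in \text{RP}\}$ equals $\{f : L_f \in \text{P}\}$, and conversely, as in the proof of Corollary \ref{Corollary1}, any $A \in$ RP gives $\#A \in$ \#RP1 with $L_{\#A} = A$, so \#RP1 $\subseteq$ \#PE pushes RP down to P. Fourth, TotP $=$ \#PE iff P $=$ NP: properness for P $\neq$ NP is Theorem \ref{PZmt}(a), and the collapse under P $=$ NP I would extract from Theorem \ref{PZmt}(b), since then the parsimoniously \#P-complete, self-reducible function \#SAT lands in \#PE and hence in TotP, dragging all of \#P into TotP.

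The separation facts are the contrapositives of the same implications and require no new idea: RP $\neq$ P yields \#RP1 $\not\subseteq$ \#PE, \#RP1 $\not\subseteq$ TotP, FPRAS $\not\subseteq$ \#PE, FPRAS $\not\subseteq$ TotP (this last is exactly Corollary \ref{Corollary1}), and FP $\subsetneq$ \#RP1; NP $\neq$ RP yields TotP $\not\subseteq$ FPRAS and \#PE $\not\subseteq$ FPRAS (Theorem \ref{TotP-vs FPRAS 1}(a), using TotP $\subseteq$ \#PE), as well as TotP $\not\subseteq$ \#RP1, FPRAS $\subsetneq$ \#RP2, and \#RP2 $\subsetneq$ \#P; and P $\neq$ NP yields the proper chain FP $\subsetneq$ TotP $\subsetneq$ \#PE $\subsetneq$ \#P. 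The only point needing care is that a double bar (mutual incomparability) may be asserted only when both one–sided non–inclusions are simultaneously available; this is precisely why TotP and \#RP1 are drawn incomparable in the world NP $\neq$ RP $\neq$ P (there both NP $\neq$ RP gives TotP $\not\subseteq$ \#RP1 and RP $\neq$ P gives \#RP1 $\not\subseteq$ TotP), whereas in NP $=$ RP $\neq$ P and NP $\neq$ RP $=$ P only one direction survives and a single bar is drawn.

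Finally I would assemble the four figures mechanically: substitute the world's parameter values into the collapse and separation lemmas, check that each properness marker is backed by the hypothesis that actually holds there (e.g. \#PE $\subsetneq$ \#P in worlds 2--4 is legitimate because each forces NP $\neq$ P), and confirm that every plain arrow is a known unconditional inclusion the hypotheses do not let us sharpen. I expect the main obstacle to be the world NP $=$ RP $=$ P, where all five of \#P, FPRAS, \#PE, TotP, \#RP2 must coincide: the genuinely non-bookkeeping step is TotP $=$ \#P under P $=$ NP, which I intend to obtain from Theorem \ref{PZmt}(b) via the self-reducibility and parsimonious completeness of \#SAT rather than from a naive census/binary-search argument (which fails, since counting accepting paths stays hard even when P $=$ NP). A secondary, purely diagrammatic source of error to guard against is the orientation of each single turnstile; I would fix the convention that the bar sits on the side of the class that fails to be a superset, and verify in each world that the separation I can actually prove matches the side on which the bar is drawn.
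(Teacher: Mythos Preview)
Your proposal is correct and follows essentially the same approach as the paper: both proofs amount to compiling a list of implications of the form ``$A\subseteq B$ forces NP$=$RP (resp.\ RP$=$P, NP$=$P)'' together with their converses where available, and then reading off each figure by contraposition. Your organization into ``collapse biconditionals'' and ``separation contrapositives'' is a bit tidier than the paper's bare bullet list, but the individual facts you invoke (e.g.\ TotP $\subseteq$ FPRAS $\Rightarrow$ NP$=$RP, \#RP1 $\subseteq$ TotP $\Rightarrow$ RP$=$P, and the key TotP $=$ \#P under P$=$NP via self-reducibility of \#SAT and Theorem~\ref{PZmt}(b)) match the paper's line by line.
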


\begin{proof}
First of all, intersections between any of the above classes are non-empty, because FP is a subclass of all of them.
For the rest inclusions, we have the following.

\#P$\subseteq$\#RP2 $\Leftrightarrow $ NP=RP (by definitions).

\#RP2 $\subseteq$ \#PE $\Leftrightarrow$ RP=P (by definitions).

\#P $\subseteq$ TotP $\Rightarrow$ NP=P (by theorem \ref{PZmt}).
Also NP=P $\Rightarrow$ \#SAT $\in$ TotP, because \#SAT is self-reducible, and it would have easy decision. Thus \#P $\subseteq$ TotP, because TotP is the Karp-closure of self reducible problems with easy decision, and \#SAT is \#P-complete under Karp reductions. So we get 
\#P = \#PE =   TotP $\Leftrightarrow$ P=NP.  

TotP $\subseteq$ FPRAS $\Rightarrow$ NP=RP (by theorem \ref{TotP-vs FPRAS 1}).

\#RP2 $\subseteq$ FPRAS $\Rightarrow$ TotP $\subseteq$ FPRAS $\Rightarrow$ NP=RP.

\#RP2=\#RP1 $\Rightarrow$ NP=RP (by corollary \ref{Corollary2}).

\#P $\subseteq$ FPRAS $\Leftrightarrow$ NP=RP (by theorem \ref{TotP-vs FPRAS 1}).

\#PE $\subseteq$ FPRAS $\Rightarrow$ TotP $\subseteq$ FPRAS $\Rightarrow$ NP=RP.

\#PE $\subseteq$ \#RP1 $\Rightarrow$ \#PE $\subseteq$ FPRAS  $\Rightarrow$ NP=RP.

\#RP1 $\subseteq$ \#PE $\Rightarrow$ RP=P (same proof as corollary \ref{Corollary1}).

FPRAS $\subseteq$ \#PE $\Rightarrow$ \#RP1 $\subseteq$ \#PE $\Rightarrow$ RP=P.

FPRAS $\subseteq$ TotP $\Rightarrow$ RP=P (corollary \ref{Corollary1}).

TotP $\subseteq$ \#RP1 $\Rightarrow$ TotP $\subseteq$ FPRAS $\Rightarrow$ NP=RP.

\#RP1 $\subseteq$ TotP $\Rightarrow$ RP=P (same proof as corollary \ref{Corollary1}).

\#RP1 $\subseteq$ FP $\Rightarrow$ \#RP1 $\subseteq$ TotP $\Rightarrow$ RP=P.

TotP $\subseteq$ FP $\Rightarrow$ NP=P (by theorem \ref{PZmt}).
\end{proof}

\section{Discussion and Further Research}\label{discussion}
Regarding the question of whether  FPRAS $\subseteq$ TotP, Corollary 1 means that it does not hold unless P=RP, and if it holds, then it is more difficult to prove it, than proving RP=P. 

 Determining if RP=P implies FPRAS $\subseteq$ TotP, is an  open question, interesting on its own. 
Morover, if it holds, then the (FPRAS $\subseteq$ TotP?) question would be an equivalent formulation of the RP vs P question. Note that we already know that the (TotP $\subseteq$ FPRAS?) question is an equivalent formulation of NP vs RP.

Corollary 2 means that although \#RP1 and \#RP2 are natural counting counterparts of RP, they can't be equal unless NP=RP. This happens because when we consider functions in \#P as functions that count the number of solutions to some decision problem, then in the first class the accepting random choices are in 1-1 correspondence with the solutions we want to count, while in the second class, random choices and solutions do not necessarily coincide. 

It is interesting that regarding decision, the two analogue  definitions of RP are equivalent, i.e. for every problem in RP, there exists an NPTM in ${\cal MR}$ deciding the problem, and vice versa. But for the  counting versions, this can't hold any more, unless NP=RP.

Regarding the possibility of characterizing FPRAS in terms of counting paths of Turing machines, Corollary 2 implies such a characterization, if the two classes coincide. However, in this case, as we saw, FPRAS would also be equal to \#P, which of course  is a much simpler characterization. 

On the other hand, without any assumptions, theorem \ref{main theorem} means that the characterization of FPRAS that we are looking for, is something between the two definitions of \#RP1 and \#RP2, a relaxation of the first and a restriction of the second.  However it might be not very elegant.

It is an open question whether FPRAS is in \#RP1. It seems that both a negative and a positive answer are compatible with all four worlds. If the answer is positive, then \#RP1 would be an equivalent characterization of FPRAS.

\section{Related work}\label{related work}
There is an enormous literature on approximation algorithms and inapproximability results for individual problems in \#P, e.g.  \cite{Valiant79,JS96perm,KLM89,DFJ02,Wei06}.

From a unifying point of view, the most important results regarding approximability are the following. 
Every function in \#P either admits a f.p.r.a.s., or is inapproximable to within a polynomial factor \cite{JS89}.
For self-reducible problems in \#P, f.p.r.a.s. is equivalent to almost uniform sampling \cite{JS89}.
With respect to approximation preserving reductions, there are three main classes of functions in \#P: (a) the class of functions  that admit an f.p.r.a.s., (b) the class of fuctions that are interreducible with \#SAT under approximation preserving reductions, and  (c) the class of problems that are interreducible with \#BIS under approximation preserving reductions \cite{DGGJ04}. Problems in the second class are inapproximable unless NP=RP, while the approximability status of the third class is unknown.

Several classes of problems with easy decision version are studied in   \cite{KPSZ01,PZ06,Pagourtzis01,BGPT17,HHKW07, AJ93}. Also counting problems in \#P are studied in terms of  descriptive complexity in \cite{Saluja, Bulatov, Arenas, DGGJ04, Dalmau}. 
  
Classes that are so far proven to admit f.p.r.a.s.  are \#R$\Sigma$2, \#$\Sigma$1 \cite{Saluja} and some extensions of \#$\Sigma$1 \cite{Arenas}. Note that all of them are subclasses of TotP.

\section{Conclusions} 
In this paper we refined the picture of counting complexity inside \#P, in particular with regard to approximate counting.
 
We have introduced two subclasses of \#P: \#RP1 and \#RP2, that both can be considered as counting versions of RP. We showed that FPRAS lies between them, and that these classes do not coincide unless NP=RP. With the help of these classes we also proved that FPRAS cannot be contained in TotP, unless RP=P. 

Finally we studied the implications of randomized decision complexity, i.e. RP vs NP, and RP vs P, to the relationships between some counting classes inside \#P, that have decision version either in P or in RP. 

Further research could be the study of the relationships  between the new introduced classes with other known subclasses of \#P, as mentioned in the related work section \ref{related work}.  
\section*{Appendix}

\begin{theorem}\label{theorem2} (folklore)
If NP=RP then all problems in \#P admit an f.p.r.a.s. 
\end{theorem}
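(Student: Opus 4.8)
The plan is to reduce this to Stockmeyer's approximate counting theorem together with the collapse hypothesis. The core fact I would invoke is that every $f\in\#P$ admits an f.p.r.a.s. \emph{relative to an NP oracle}: there is a randomized polynomial-time procedure that, given $x$, $\epsilon$, $\delta$ and access to an oracle for a fixed NP language, outputs a value $\widehat{f(x)}$ within a factor $(1\pm\epsilon)$ with probability at least $1-\delta$. Granting this, the theorem follows from the hypothesis NP$=$RP, which lets me replace each oracle call by a randomized polynomial-time computation and thereby remove the oracle altogether, turning the procedure into a genuine f.p.r.a.s.

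First I would recall the construction behind the oracle version. Write $f(x)=|S_x|$ with $S_x=\{y\in\{0,1\}^{p(|x|)} : R(x,y)\}$ for the relation $R$ and polynomial $p$ from the definition of \#P, and fix a family of pairwise independent hash functions $h\colon\{0,1\}^{p(|x|)}\to\{0,1\}^{k}$. For a candidate exponent $k$, the predicate ``there exists $y$ with $R(x,y)$ and $h(y)=0^{k}$'' is decidable in NP, since $R$ is polynomial-time decidable and $h$ is efficiently computable; a second-moment argument shows that, over the random choice of $h$, this predicate holds with high probability when $|S_x|\gg 2^{k}$ and fails with high probability when $|S_x|\ll 2^{k}$. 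Sweeping over $k=0,1,\dots,p(|x|)$ and taking majority votes over independent choices of $h$ locates $\log_2|S_x|$ up to an additive constant, i.e. yields a constant-factor multiplicative estimate of $f(x)$. To upgrade this to an arbitrary $(1+\epsilon)$ factor I would use the powering trick: apply the estimator to the $t$-fold product relation $R^{t}(x,(y_1,\dots,y_t))=\bigwedge_{i=1}^{t}R(x,y_i)$, whose solution count is exactly $f(x)^{t}$, and return the $t$-th root of the estimate; choosing $t=\Theta(\epsilon^{-1})$ turns a constant-factor estimate of $f(x)^{t}$ into a $(1\pm\epsilon)$-factor estimate of $f(x)$, with running time and number of NP queries polynomial in $|x|$, $\epsilon^{-1}$ and $\log\delta^{-1}$.

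It remains to discharge the oracle under NP$=$RP. Each NP query above then becomes an RP language, hence is decided by a randomized polynomial-time machine with one-sided error: it never accepts a ``no'' instance and accepts a ``yes'' instance with probability at least $1/2$. I would amplify each such machine by independent repetition so that its error drops below $2^{-q(|x|)}$ for a suitable polynomial $q$; since the whole scheme issues only polynomially many queries, a union bound bounds the total simulation error by an exponentially small quantity. Combining this with the intrinsic failure probability and rescaling $\delta$ by a constant keeps the overall failure probability below $\delta$, while the substitution costs only a polynomial factor in time. This produces a genuine f.p.r.a.s. for $f$, so $\#P\subseteq$ FPRAS.

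The main obstacle I expect is the analysis of the hashing step rather than the oracle removal: one must verify via Chebyshev / second-moment bounds that the pairwise-independent hash test has the required reliability in both the $|S_x|\gg 2^{k}$ and $|S_x|\ll 2^{k}$ regimes, and then check that the powering-and-root step composes these estimates correctly into a $(1+\epsilon)$ approximation. The one-sidedness of RP meshes conveniently here, since the hashing predicate is existential and amplification handles the ``yes'' side; the delicate part is the bookkeeping that propagates all the error probabilities through the $k$-sweep, the majority votes, and the $t$-fold product, but this is routine.
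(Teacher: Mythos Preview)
Your argument is correct and follows the same high-level strategy as the paper's sketch: invoke Stockmeyer-style approximate counting relative to an oracle, then use the hypothesis NP$=$RP to eliminate the oracle. The difference is only in which version of Stockmeyer's theorem is quoted. The paper cites the deterministic $\Sigma_2^p$-oracle formulation and then argues that NP$=$RP collapses $\Sigma_2^p = \mathrm{NP}^{\mathrm{NP}}$ to $\mathrm{RP}^{\mathrm{RP}}\subseteq \mathrm{BPP}$, after which a BPP oracle can be absorbed into the randomized base machine. You instead work directly with the randomized NP-oracle formulation (approximate counting in $\mathrm{BPP}^{\mathrm{NP}}$ via pairwise-independent hashing and powering), and then replace each NP query by an amplified RP computation. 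Your route is slightly more elementary, since it avoids the detour through the second level of the polynomial hierarchy and the $\mathrm{BPP}^{\mathrm{BPP}}=\mathrm{BPP}$ step; the paper's route is terser because it simply cites the black-box result. Either way the substance is the same, and your error bookkeeping (amplification plus union bound over polynomially many queries) is the standard and correct way to carry out the oracle removal.
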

\begin{proof} (sketch)
In \cite{Stockmeyer85a} Stockmeyer has proven that an f.p.r.a.s., with access to a $\Sigma_2^p$ oracle, exists for any problem in \#P. If NP=RP then $\Sigma_2^p= RP^{RP}\subseteq BPP$. Finally it is easy to see that an f.p.r.a.s. with access to a BPP oracle, can be replaced by another  f.p.r.a.s., that simulates the oracle calls itself.
\end{proof}


\end{document}